\newcommand{\floor}[1]{\left\lfloor #1\right\rfloor}
\newcommand{\ZZ}{\mathbb{Z}}
\DeclareMathOperator{\conv}{conv}
\newtheorem{theorem}{Theorem}
\newtheorem{lemma}[theorem]{Lemma}
\title{Digital Convex + Unimodular Mapping = 8-Connected (All Points but One 4-Connected)}
\author{Loïc Crombez}
\affil{Universit\'e Clermont Auvergne and LIMOS}
\begin{document}
\maketitle
\begin{abstract}

In two dimensional digital geometry, two lattice points are \emph{4-connected} (resp. \emph{8-connected}) if their Euclidean distance is at most one (resp. $\sqrt{2}$). A set $S \subset \ZZ^2$ is \emph{4-connected} (resp. \emph{8-connected}) if for all pair of points $p_1, p_2$ in $S$ there is a path connecting $p_1$ to $p_2$ such that every edge consists of a 4-connected (resp. 8-connected) pair of points.
The original definition of digital convexity which states that a set $S \subset \ZZ^d$ is \emph{digital convex} if $\conv(S) \cap \ZZ^d= S$, where $\conv(S)$ denotes the convex hull of $S$ does not
guarantee connectivity. However, multiple algorithms assume connectivity.
In this paper, we show that in two dimensional space, any digital convex set $S$ of $n$ points is unimodularly equivalent to a 8-connected digital convex set $C$.
In fact, the resulting digital convex set $C$ is 4-connected except for at most one point which is 8-connected to the rest of the set. 
The matrix of $SL_2(\ZZ)$ defining the affine isomorphism of $\ZZ^2$ between the two unimodularly equivalent lattice polytopes $S$ and $C$ can be computed in roughly $O(n)$ time.
We also show that no similar result is possible in higher dimension.

\end{abstract}

\section{Introduction}

Digital Geometry studies the geometry of \emph{lattice points}, those are the points with integer coordinates~\cite{KlR04}.
Convexity, a fundamental concept in continuous geometry~\cite{Ro89}, is naturally also fundamental in digital geometry. However, unlike in any linear space where convexity is clearly defined, several definitions have been investigated for convexity in digital geometry~\cite{KR82,KR82-2,Cha83,Ki96,ChR98}. Just in two dimensions, we encounter several definitions such as triangle line~\cite{KR82}, HV convexity~\cite{BDNP96}, Q convexity~\cite{Da01}.

Some of those definitions where created in order to guarantee that under those definitions a convex object is connected (in terms of the induced grid subgraph). No such guarantee is given by the original following definition of digital convexity which is the one that will be used throughout this paper: A set $S$ of $n$ lattice points is said to be \emph{digital convex} if $\conv(S) \cap \ZZ^d = S$, where $\conv(S)$ is the convex hull of $S$.
This definition is equivalent to saying that there exist a convex polyhedron $P$ in $\mathcal{R}^d$ such that $P \cap \ZZ^d = S$.
However, this definition of digital convexity provides a lot of mathematical properties such as being preserved under unimodular affine transformations $SL_2(\ZZ)$. These transformations are the lattice preserving mappings that also preserves parallel lines and area~\cite{HaN12,BaP92}.

In this paper we prove that any digital convex sets $S$ is unimodularly equivalent to an \emph{almost 4-connected} set $C$. We say that a set $C$ is \emph{almost 4-connected} if $C$ is 4-connected except for at most one point which is 8-connected to the rest of the set. We also propose an algorithm that computes such a set $C$ in roughly $O(n)$ time.

The demonstration of existence of such a set, and the algorithm are both based on the same technique which consists in mapping a \emph{lattice diameter}~\cite{BaF01} of $S$ to a horizontal line. 
A \emph{lattice diameter} of digital convex set $S$ is the longest string of integer points on any line in the Euclidean space that is contained in $\conv(S)$.
The lattice diameter is invariant under the group of unimodular affine transformations $SL_2(\ZZ)$. 

The construction consists in computing a lattice diameter $d$ of $S$ , applying an affine isomorphism $\ZZ^2$ that maps $S$ to a horizontal line, and finally adjusts through \emph{horizontal shear mapping} in order to obtain an almost 4-connected set. A horizontal shear mapping is a unimodular affine transformation that preserves the $y$-coordinates. The matrix of defining those transformations are of the form $
\begin{bmatrix}
1 & k\\
0 & 1
\end{bmatrix}
$.\\
Note that in 3 dimension the tetrahedron $((0,0,0), (0,1,0), (1,0,0), (1,1,k))$ has no lattice points in its interior and has a volume of $\frac{k}{6}$. As a consequence, since unimodular affine transformations preserve volume, similar results as the one presented in this paper for unimodular affine transformations are impossible in dimension higher than 2.

\section{Unimodularly equivalence to connected set} 
 \label{s:eq} 

The purpose of this section is to provide a constructive proof of the following theorem. 

\begin{theorem} \label{th:1}
    For any digital convex set $S$ there is a unimodular affine transformations that maps $S$ to an almost 4-connected set $C$.
\end{theorem}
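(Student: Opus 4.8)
The plan is to carry out the construction announced in the introduction: move a lattice diameter of $S$ onto a horizontal line, observe that digital convexity already supplies all the connectivity inside each row, and then kill the remaining gaps between consecutive rows with a single horizontal shear. First I would compute a lattice diameter $D$ of $S$, of cardinality $\ell$, and fix a matrix $M\in SL_2(\ZZ)$ sending the primitive direction of $D$ to $(1,0)$; composing with a translation we may assume $M(S)$ contains $(0,0),(1,0),\dots,(\ell-1,0)$. Since $M$ preserves digital convexity, $P:=\conv(M(S))$ satisfies $P\cap\ZZ^2=M(S)$, and for each integer $c$ with $P\cap\{y=c\}\neq\emptyset$ the lattice points of $M(S)$ in row $c$ form a contiguous interval $[A_c,B_c]$ (it is the intersection of a segment with a lattice line). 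Two bookkeeping facts follow: intra-row 4-connectivity is automatic, and there are no empty rows strictly between occupied ones — an empty middle row would make the cross-section length of $P$, a concave function of $y$, drop below $1$ between two values where it is at least $1$ and $\ell-1$ respectively, which for $\ell\ge 4$ forces a horizontal lattice segment longer than $D$; the finitely many shapes with $\ell\le 3$ are checked directly. Thus the theorem reduces to making consecutive rows overlap, $[A_c,B_c]\cap[A_{c+1},B_{c+1}]\neq\emptyset$, with one pair allowed to satisfy only the weaker 8-connectivity condition $A_{c+1}\le B_c+1$ and $A_c\le B_{c+1}+1$.

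Next I would bring in the shear. The transformation with matrix $\bigl[\begin{smallmatrix}1&k\\0&1\end{smallmatrix}\bigr]$ fixes the line $y=0$ and translates row $c$ by $kc$, so after shearing, rows $c$ and $c+1$ are 4-connected precisely when $k$ lies in the nonempty integer interval $I_c:=[\,A_c-B_{c+1},\,B_c-A_{c+1}\,]$, and 8-connected when $k$ lies in its one-step enlargement $I_c^{+}:=[\,A_c-B_{c+1}-1,\,B_c-A_{c+1}+1\,]$. Finding one shear that works for every pair amounts, by the one-dimensional Helly theorem, to the intervals $I_c$ having a common integer; allowing one exceptional pair amounts to discarding a single $I_{c_0}$ (replacing it by $I_{c_0}^{+}$) and asking the remaining intervals to share a point. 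Writing $\delta_c:=A_{c+1}-A_c$ and $g_c:=B_c-A_c\ge 0$, one has $I_c=[\,-\delta_c-g_{c+1},\,-\delta_c+g_c\,]$, and since the left boundary of $P$ is convex, $\delta_c$ is nondecreasing up to the rounding hidden in $A_c=\lceil\lambda(c)\rceil$; granting this, for $c<c'$ the condition $I_c\cap I_{c'}\neq\emptyset$ collapses essentially to the single inequality $\delta_{c'}-\delta_c\le g_{c+1}+g_{c'}$.

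The main obstacle is proving that this inequality holds for every pair $c<c'$ with at most one failure, and that the unique failure together with the accumulated rounding costs at most the single unit of slack reserved in passing from $I_{c_0}$ to $I_{c_0}^{+}$. The governing principle is a trade-off between bending and diameter: if the left boundary of $P$ turned by more than $g_{c+1}+g_{c'}$ between heights $c$ and $c'$ while the rows in that band were only $g_{c+1}$ and $g_{c'}$ wide, one could fit a lattice segment that stays inside $P$ across more than $\ell-1$ consecutive unit steps in a non-horizontal direction, contradicting the maximality of $D$. Pinning down which band geometry actually forces such a segment, and verifying that the rounding in $A_c,B_c$ never eats more than the reserved unit, is the technical heart of the argument.

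Finally, having secured an integer $k$ in the common intersection of all but one $I_c$ (and in the surviving $I_{c_0}^{+}$), I would apply the corresponding shear; the composition with $M$ and the initial translation is the desired element of $SL_2(\ZZ)$, and its image $C$ of $S$ is digital convex with all consecutive rows 4-connected except one pair joined by an 8-edge. A short extra step shows this lone defect can be taken to be a pendant point rather than a bridge: a wide extreme row far from overlapping its neighbour would again generate a non-horizontal lattice segment exceeding the diameter, so the exceptional pair can be arranged to involve an extreme row that is a single lattice point, 8-connected to the rest; hence $C$ is almost 4-connected, proving Theorem~\ref{th:1}.
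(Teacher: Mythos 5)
Your reduction to rows has a false step at its foundation: the claim that, once a lattice diameter is mapped to the horizontal row $y=0$, there are no empty rows strictly between occupied ones. This is not true, and the counterexamples are exactly the family the paper must treat separately. Take the ``pompom'' triangle with vertices $(0,0)$, $(k-1,0)$, $(-1,k+1)$ for any $k\ge 2$: its bottom row is a lattice diameter with $k$ points, yet at height $y=k$ the cross-section is the interval $\left[-\frac{k}{k+1},-\frac{1}{k+1}\right]$, which contains no integer, so row $k$ is empty while row $k+1$ contains the apex $(-1,k+1)$. Your concavity argument does not exclude this because the cross-section length at the topmost occupied row can be (and here is) far below $1$, so the interpolation never forces a horizontal segment longer than the diameter; and since these examples exist for every $k$, the escape hatch ``check the finitely many shapes with $\ell\le 3$ directly'' does not apply. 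The failure is fatal to your strategy as stated: a horizontal shear permutes nothing between rows (it translates each row within itself), so an empty intermediate row survives every horizontal shear, and the image can never be 8-connected, let alone almost 4-connected. The paper handles precisely this configuration by leaving the class of horizontal shears and applying a \emph{vertical} shear that maps the diameter onto the line $y=x$, after which a separate analysis of the bottom point is needed (Section~\ref{ss:pom_pom} and Figure~\ref{fig:pompom_bot}). Any correct proof must either detect and transform this case with a non-horizontal unimodular map, as the paper does, or prove it cannot coexist with the rest of the configuration --- it can.

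Beyond that, the part you yourself flag as ``the technical heart'' --- that the overlap intervals $I_c$ pairwise intersect with at most one failure, that the failure costs at most the one unit of slack, and that the rounding in $A_c=\lceil\lambda(c)\rceil$ never eats more --- is exactly where the real work lies, and it is not carried out. The paper's substitute for this is concrete: a reduction to the quadrilateral spanned by the diameter and the extreme points (Lemma~\ref{le:quadri}), a shear placing the top vertex as centrally as possible, an enumeration showing the only obstruction above the diameter is the single lattice point $(-1,k+1)$, and then Lemmas~\ref{l::left_connect} and~\ref{l::top_connect} plus the wedge analysis of Appendix~\ref{app:un} for the bottom vertex. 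Your Helly-type framing is an appealing way to organize the search for one shear, but as submitted it is a plan with its two load-bearing claims missing, one of which (no empty rows) is false.
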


In order to prove theorem~\ref{th:1} we describe a multiple step construction that results in an almost 4-connected set. The steps of this constructions are the following:

\begin{itemize}
    \item First, we find a lattice diameter $d$ of $S$. We define $k$ as the number of lattice points located on $d$.
    \item We then map $S$ to $S_1$ using a unimodular affine transformation that maps $d$ to $d_1$ such that $d_1$ starts from the point $(0,0)$ and end at the point $(k-1,0)$. Note that $d_1$ is a lattice diameter of $S_1$.
    \item In Section~\ref{ss::reduc} we reduce the problem to the study of a subset $\meddiamond \in S_1$ such that the convex hull of $\meddiamond$ is the convex hull of: $d_1$, the topmost point in $S_1$, and the bottommost point in $S_1$.
    From here on in, for simplicity, we assume that the lattice point $p \meddiamond$ that is the furthest from the line $y=0$ is the topmost point in $S_1$.
    \item In Section~\ref{ss:top_part} we only consider $\medtriangleup \in \meddiamond$, the subset of lattice points above $d_1$. We show that there is a horizontal shear mapping $\mathcal{M_1}$ (a unimodular affine that does not affect $y$-coordinates) such that $\mathcal{M_1}(\medtriangleup)$ is 4-connected, to the exception of one special case that is treated in Section~\ref{ss:pom_pom}.
    \item Section~\ref{ss:bottom_part} focuses $\medtriangledown \in \meddiamond$, the subset of lattice points below $d_1$ and show that there is a shear mapping $\mathcal{M_2}$ such that $\mathcal{M_2}(\meddiamond)$ is almost 4-connected.
    \item Finally, in Section~\ref{ss:general_case} we go back to the general case and explain why our construction not only almost 4-connects $\meddiamond$, but also $S$.
\end{itemize}

\subsection{Reduction to a quadrilateral} \label{ss::reduc}
 
For simplicity, in most of this proof, we will consider the subset $\meddiamond \in S_1$ that is the intersection of the lattice grid $\ZZ^2$ with the convex hull of: $d_1$, $top$, and $bottom$, where $top$ and $bottom$ are respectively the topmost and bottommost points of $S_1$ (See Figure~\ref{fig:small_quad} a). This reduction is mostly possible thanks to Lemma~\ref{le:quadri}, even though we will have to take additional precautions detailed in Section~\ref{ss:general_case} when $\meddiamond$ is almost 4-connected but not 4-connected.

 \begin{figure}[tb] 
    \begin{center}
        \includegraphics[scale=0.7]{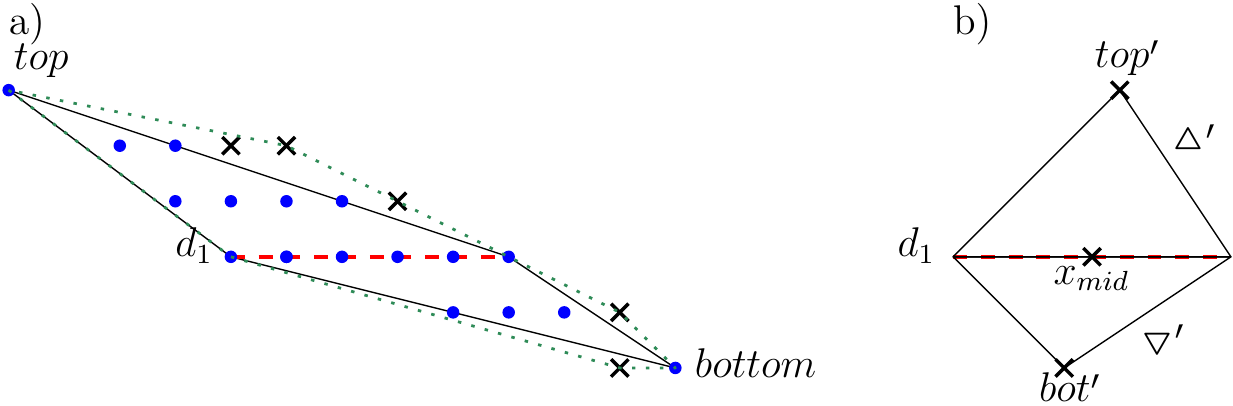}
    \end{center}
    \caption{a) The lattice diameter $d_1$ of $S_1$ is represented by the dashed red line segment. The subset $\meddiamond$ that is the union of an upper and lower triangle with the diameter as a common horizontal edge is represented by the blue dots. The set $S_1$ is the union of the blue dots, and the black crosses. b) Representation of $\meddiamond'$ after horizontal shearing. $x_{mid}$ is the middle of $d_1$ the lattice diameter of $\meddiamond$. $\medtriangleup$ and $\medtriangledown$ are the triangles on each side of $d_1$. The points $top'$ and $bot'$ are the two points with the most extreme $y$-coordinates.}
    \centering
    \label{fig:small_quad}
\end{figure}

\begin{lemma}\label{le:quadri}
    For any digital convex set $S_1$ such that $top$ (resp. $bottom$) are one of the topmost (resp. bottommost points) in $S_1$, and such that $\meddiamond \in S_1$ is the intersection of $\ZZ^2$ with the convex hull of the horizontal diameter $d_1$ and the two points $top$, $bottom$.
    For any horizontal shear mapping $\mathcal{M}$, if $\mathcal{M}(\meddiamond)$ is 4-connected, then $\mathcal{M}(S_1)$ is also 4-connected.
\end{lemma}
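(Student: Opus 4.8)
The plan is to show that every point of $S_1$ can be connected to the set $\meddiamond$ by a $4$-connected path lying entirely inside $\mathcal{M}(S_1)$, after which $4$-connectedness of $\mathcal{M}(\meddiamond)$ propagates to all of $\mathcal{M}(S_1)$. Since a horizontal shear preserves $y$-coordinates and maps $\ZZ^2$ bijectively onto $\ZZ^2$, and since digital convexity is preserved by unimodular maps, it is harmless to prove the claim for $S_1$ itself and then transport it: concretely, I would first observe that $\mathcal{M}(\meddiamond)$ is the intersection of $\ZZ^2$ with the convex hull of $\mathcal{M}(d_1)$ (still horizontal, still a lattice diameter), $\mathcal{M}(top)$ and $\mathcal{M}(bottom)$, so after renaming we may as well assume $\mathcal{M}$ is the identity and just show: \emph{if $\meddiamond$ is $4$-connected then $S_1$ is $4$-connected}.

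The core geometric fact I would exploit is the maximality of the lattice diameter. Write $d_1$ as the segment of lattice points on the line $y=0$ from $(0,0)$ to $(k-1,0)$. For any horizontal line $y=c$ with $0 \le c \le y(top)$, the set $S_1 \cap \{y=c\}$ is a (possibly empty) contiguous run of lattice points — this is immediate from convexity of $\conv(S_1)$ — and its cardinality is at most $k$, since otherwise it would be a longer lattice string than $d_1$, contradicting that $d_1$ is a lattice diameter. The key consequence is that for consecutive occupied rows $y=c$ and $y=c+1$ inside $\conv(S_1)$, the two contiguous runs must overlap in $x$: if the run at height $c$ spans $[a,b]$ and the run at height $c+1$ spans $[a',b']$, then because $\conv$ of these two runs already contains a lattice string of length roughly $(b-a'+1)$ or $(b'-a+1)$ on suitable lines, the intervals $[a,b]$ and $[a',b']$ cannot be disjoint — indeed their horizontal projections must intersect, for otherwise one could exhibit a lattice segment (along the line through the near endpoints, or simply by the convex-hull/length argument) exceeding $k$ points. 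Hence there is a lattice point at height $c$ and a lattice point at height $c+1$ that are vertically adjacent, i.e.\ $4$-connected. I would prove the analogous statement for rows below $y=0$. Iterating this row-by-row gives, for any point $p \in S_1$ above $d_1$, a $4$-connected path from $p$ down to a point of row $y=0$; but row $y=0$ of $S_1$ equals $d_1 \subseteq \meddiamond$, so $p$ is $4$-connected within $S_1$ to $\meddiamond$. The same holds for points below $d_1$, and points of $\meddiamond$ itself are $4$-connected to $d_1$ by hypothesis. Combining, any two points of $S_1$ are joined through $\meddiamond$, so $S_1$ is $4$-connected.

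The step I expect to be the main obstacle is the overlap claim for consecutive rows: one must argue carefully that two vertically consecutive, horizontally contiguous lattice runs inside $\conv(S_1)$ have overlapping $x$-projections, using only that no line carries more than $k$ lattice points of $\conv(S_1)$. The subtlety is that the ``long lattice string'' witnessing a contradiction need not be axis-parallel — it may be a diagonal segment spanning several rows — so the bound must be set up so that a gap between two runs forces a lattice segment of length $>k$ somewhere in $\conv(S_1)$. I would handle this by taking the convex hull of the two runs and counting lattice points on the line through the two endpoints that face the gap; a short computation (which I will not grind through here) shows that a nonempty gap makes this count at least $k+1$, the required contradiction. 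Once this row-adjacency lemma is in hand, the rest is a routine induction on $|y|$, and the reduction at the start makes the shear $\mathcal{M}$ disappear, so no further work is needed to recover the stated form of the lemma.
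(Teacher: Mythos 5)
Your opening reduction (absorbing the shear $\mathcal{M}$ and proving the statement as if $\mathcal{M}$ were the identity) is fine, but the core of your argument has a genuine gap: the row-overlap claim you plan to derive from lattice-diameter maximality is false, and the row-by-row induction built on it also silently assumes that no intermediate row is empty. For the overlap claim, take $S_1=\{(0,0),(1,0),(3,-1)\}$. This set is digital convex (the triangle is primitive, so it contains no further lattice points), the horizontal segment $d_1$ from $(0,0)$ to $(1,0)$ is a lattice diameter with $k=2$, yet the occupied rows $y=0$ and $y=-1$ carry the runs $[0,1]$ and $\{3\}$, whose $x$-projections are disjoint, and no lattice segment in $\conv(S_1)$ contains more than $k=2$ points --- so the ``a gap forces a lattice string of length $>k$'' computation you deferred cannot be carried out. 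For the empty-row issue, the paper's own pompom triangle $\{(0,0),(1,0),(0,1),(-1,3)\}$ (digital convex, horizontal lattice diameter with $k=2$) has row $y=2$ empty, so there is no row-by-row path from $(-1,3)$ down to $d_1$. Neither example contradicts the lemma, because in both of them $\meddiamond$ itself fails to be 4-connected; but they show that your argument, which never really uses the hypothesis that $\mathcal{M}(\meddiamond)$ is 4-connected except to move around inside $\meddiamond$ at the very end, is in effect trying to prove an unconditional statement --- every digital convex set with a horizontal lattice diameter is 4-connected --- which is false and would contradict the premise of the whole paper.

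The hypothesis is exactly what repairs this, and it makes the proof short: since $\mathcal{M}$ preserves $y$-coordinates, the images of $top$ and $bottom$ remain extreme in $y$ for all of $\mathcal{M}(S_1)$; since $\mathcal{M}(\meddiamond)$ is 4-connected and contains points at heights $y_t$ and $y_b$, it contains a point in every row $y$ with $y_b \leq y \leq y_t$ (a 4-connected path changes $y$ by at most one per step). Any point $p\in\mathcal{M}(S_1)\setminus\mathcal{M}(\meddiamond)$ lies in such a row, and by digital convexity of $\mathcal{M}(S_1)$ the horizontal lattice segment from $p$ to a point of $\mathcal{M}(\meddiamond)$ in the same row lies entirely in $\mathcal{M}(S_1)$, giving a 4-connected path into $\mathcal{M}(\meddiamond)$. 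This is essentially the paper's proof; no diameter-maximality or row-overlap argument is needed, and your proposal only becomes correct if you first invoke the 4-connectedness of $\mathcal{M}(\meddiamond)$ to populate every row and then connect horizontally, at which point the overlap machinery is superfluous.
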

\begin{proof}
     We denote $y_t$ and $y_b$ the $y$-coordinates of the topmost and bottommost point in $\meddiamond$.
    As $\mathcal{M}$ is a horizontal shear mapping, it preserves $y$-coordinates, and as $\meddiamond_c$ is 4-connected, for each integer $y_i$ such that $y_t \geq y_i \geq y_b$ there is a point in $\meddiamond_c$ whose $y$-coordinate is equal to $y_i$.
    Now, we consider a point $p(x_p,y_p)$ in $S_c \notin \meddiamond_c$. We have $y_t \geq y_p \geq y_b$, and hence there is a point $p_q \in \meddiamond_c$ whose $y$-coordinate is $y_p$. As $S_c$ is digital convex, all lattice points on the horizontal line segment from $p$ to $p_q$ are in $S_c$, hence $p$ is 4-connected to $\meddiamond_c$.
\end{proof}

We denote $k$ the number of lattice points in $d$, the lattice diameter of $S_1$, which we mapped to a horizontal line segment $d_1$.
The leftmost and rightmost lattice points in $d_1$ are $p_0$ and $p_{k-1}$.
For simplicity, we choose the coordinates such that $p_0$ is the origin. Hence the coordinates of $p_0$ and $p_{k-1}$ are $p_0(0,0)$ and $p_{k-1}(k-1,0)$.
For simplicity, from now on, we assume that the $y$-coordinate of $top$ is larger or equal than the absolute value of the $y$-coordinate of $bottom$. The opposite case being symmetrically equivalent.

\subsection{Connecting the top} \label{ss:top_part}

In this section, we will only consider $\medtriangleup$, the top part of $\meddiamond$, consisting of all the points above $d_1$, $d_1$ included.
We denote $x_{mid} = \frac{k-1}{2}$.
We now apply to $\medtriangleup$ the horizontal shear mapping $\mathcal{M}_h$ such that all lattice points on $d_1$ maps to itself and such that $top(x_t,y_t)$ is mapped as close as possible to $(x_{mid}, y_t)$. The image of top by $\mathcal{M}_h$is $top'(top'_x,y_t)$. 
For simplicity we assume that $top'_x \leq x_{mid}$. The case when $top'_x \leq x_{mid}$ is symmetrically equivalent.
We define $\medtriangleup'$ as the set to which $\mathcal{M}_h$ maps $\medtriangleup$ (See Figure~\ref{fig:small_quad} b).
We now assume that $\medtriangleup'$ is not 4-connected and study the possible location of $top'$.

We know that:
\begin{itemize}
    \item[i] $top'_x < 0$ or  $top'_x > k-1$. Otherwise $\medtriangleup'$  would be 4-connected since the vertical segment going from $top'$ to $d_1$ is in $\medtriangleup'$ (See Figure~\ref{fig:topmap_pos} a).
    \item[ii] $top'$ cannot be located both left of the line $x=0$  and above the line $y=k-\frac{k}{k-1}x$. Otherwise the $k+1$ points: $(0,0), (0,1), ... ,(0,k)$ would be in $\medtriangleup'$ and $d_1$ would not be a lattice diameter of $\medtriangleup'$ (See Figure~\ref{fig:topmap_pos} b).
    \item[iii] $top'$ is located above or on the line $y=\frac{k-1}{2}-2x$. Otherwise there would be a horizontal mapping that maps $top$ closer to the point $(x_{mid},y_t)$ (See Figure~\ref{fig:topmap_pos} c).
    \item[iv] $top'$ is not located both above the line $y=d-1-x$ and below the line $y=-dx$. Otherwise the $k+1$ points: $(k-1,0),(k-2,1),...(0,k-1),(-1,k)$ would be in $\medtriangleup'$ and $d_1$ would not be a lattice diameter (See Figure~\ref{fig:topmap_pos} d).
\end{itemize}
Hence, the only possible location for $top'$ is strictly inside the quadrilateral $Quad$ whose vertices are: $(0,k-1),(0,k), (-1,k),(-\frac{k-1}{k-2},k\frac{k-1}{k-2})$.
As $x=-1$ is the only vertical line of integer coordinates intersecting the inside of $Quad$, all lattice points strictly inside $Quad$ are located on the line $x = -1$.
The line $x=-1$ intersects the edges of $Quad$ at the points: $(-1,k)$ and $(-1, k + \frac{k}{k-1})$. Since $k\geq 2$, we have $k/(k-1)\leq 2$. Hence there is exactly one lattice point located strictly inside $Quad$: $(-1,k+1)$.(See Figure~\ref{fig:topmap_pos} e).

 \begin{figure}[tb] 
    \begin{center}
        \includegraphics[scale=0.7]{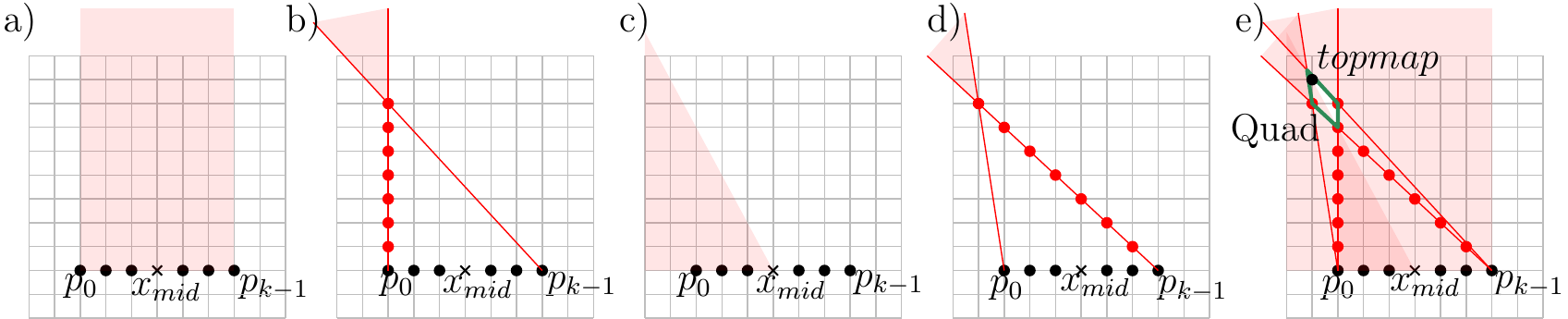}
    \end{center}
    \caption{a): $top'$ cannot lie within the red region. Otherwise $\medtriangleup'$ is 4-connected.
    b): $top'$ cannot lie within the red region. Otherwise the red line which contains more lattice point than the diameter would be in $\medtriangleup'$.
    c): $top'$ cannot lie within the red region. Otherwise there would be a horizontal shear mapping such that $top'$ is closer to $x_{mid}$ than it currently is.
    d): $top'$ cannot lie within the red region. Otherwise the red line which contains more lattice point than the diameter would be in $\medtriangleup'$.
    e): Superposition of the four previous figures. There is only one potential location for $top'$.}
    \centering
    \label{fig:topmap_pos}
\end{figure}


\subsection{Special case study} \label{ss:pom_pom}
In this section, we consider the situation where $top'$ is located at $(-1,k+1)$.
Notice that in this situation no lattice points can be added to $\medtriangleup'$ above $d_1$ without adding either the point $(-1,k)$ or the point $(0,k)$. Adding either of those points to $\medtriangleup'$ is impossible as $d_1$ would no longer be the lattice diameter (See figure~\ref{fig:pompom} a). This means that at this step, $\medtriangleup'$ is not 4-connected if and only if $\medtriangleup'$ is equal to the intersection of $\ZZ^2$ with the triangle $(0,0), (k-1,0), (-1,k+1)$. This also implies that the part of $S_1$ above $d$ is equal to $\medtriangleup'$.

We also notice that in this situation $\medtriangleup'$ is not only not 4-connected, but not even 8-connected.
However it is possible to apply to $\medtriangleup'$ a vertical shearing that maps $d_1$ to the line $y=x$. The resulting set that we call a \emph{pompom triangle} is almost 4-connected (See figure~\ref{fig:pompom} b).
Note that the pompom triangle is not unimodularly equivalent to any 4-connected set. The pompom triangle for $k=2$ consisting of only 4 points is the smallest digital convex set that is not unimodularly equivalent to any 4-connected set.

 \begin{figure}[tb] 
    \begin{center}
        \includegraphics[scale=0.7]{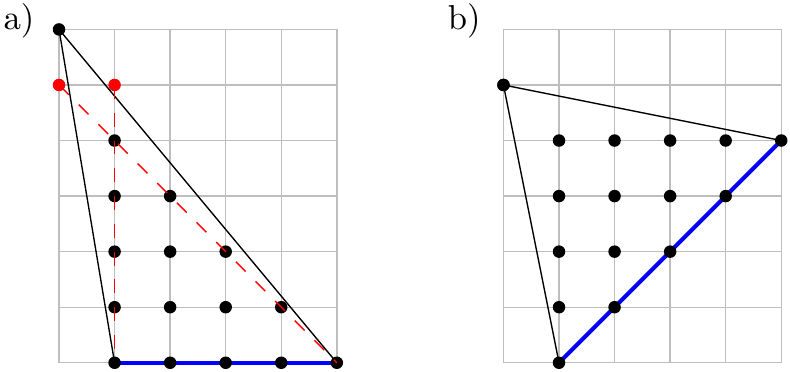}
    \end{center}
    \caption{a) The only possible location for $top'$ in which $\medtriangleup'$ is not 4-connected. As $d_1$ (shown in blue) is a lattice diameter neither of the red points can be in $\medtriangleup'$. As adding any points above $d$ would imply adding one of the red points, the set to which $S$ maps does not contain any other points above $d$ than those in $\medtriangleup'$ shown here in black.
    b) Representation of the same set as in a) after the vertical shear mapping making it almost 4-connected}
    \centering
    \label{fig:pompom}
\end{figure}

We now consider $S'$, the image of $S$ after all the mappings previously described in the case where the top part of the set is equal to the not 4-connected pompom triangle.
The position of $bottom$ after the mappings is denoted $bot'$. We consider the potential location of $bot'$.
\begin{itemize}
    \item $S'$ cannot contain any of the points $(k,0), (k,1),...(k,k-1)$. Otherwise $d_1$ would not be a lattice diameter. This defines $k-1$ cones in which $bot'$ cannot be located (See Figure~\ref{fig:pompom_bot} a).
    \item $S'$ cannot contain the point $(0,k-1)$. Otherwise $d_1$ would not be a lattice diameter. This defines a cone in which $bot'$ cannot be located (See Figure~\ref{fig:pompom_bot} b).
    \item $S'$ does not contain the point $(k,k)$. This defines a cone in which $bot'$ cannot be located (See Figure~\ref{fig:pompom_bot} c).
    \item The same constraints can also by applied using symmetry by reflection on the line $y = k-1-x$ (See Figure~\ref{fig:pompom_bot} e).
\end{itemize}

All those location constraint results in the $x$-coordinate of $bot'$ being between $0$ and $k-1$, which implies the bottom part of $S'$ located below $d$ is 4-connected, and hence $S'$ is almost 4-connected.

 \begin{figure}[tb] 
    \begin{center}
        \includegraphics[scale=0.592]{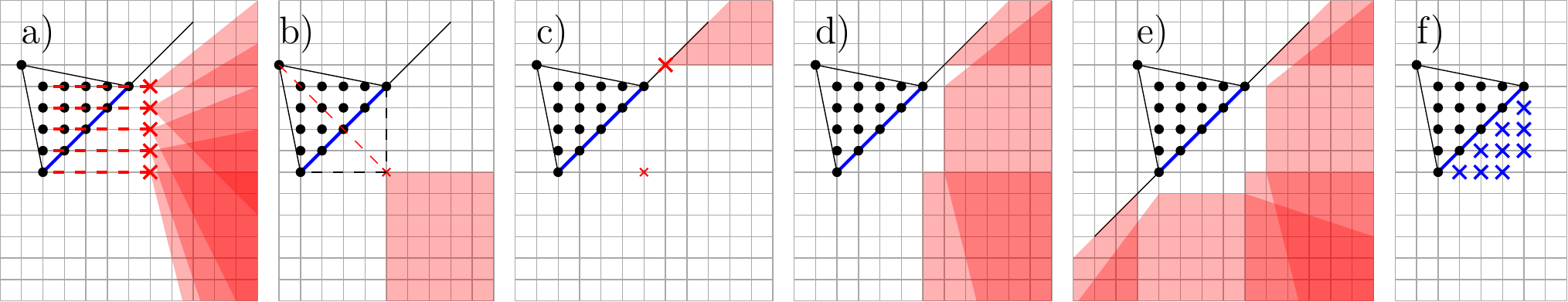}
    \end{center}
    \caption{a-e) Visual representation of the different location in which $bot'$ cannot be located when the top part is a pompom triangle. f) The remaining possible location for $bot'$ are all within a position that makes $\medtriangledown'$ 4-connected.}
    \centering
    \label{fig:pompom_bot}
\end{figure}


\subsection{Connecting the bottom} \label{ss:bottom_part}

We now consider the case where $\medtriangleup'$, the top part of the set $\meddiamond'$, which is the intersection of $\ZZ^2$ with the triangle $p_0(0,0), p_{k-1}(k-1,0), top'(a,b)$ is 4-connected, and we will focus on finding a horizontal shear mapping that almost 4-connects $\meddiamond'$.

As we moved $top'$ as close as possible to the middle of $p_0 p_{k-1}$ in Section~\ref{ss:top_part}, we have the following inequality: $|\frac{k-1}{2}-a| \leq \frac{b}{2}$.
We also showed that $0 \leq a \leq k-1$.

We say that a point $p(x,y)$ is \emph{directly above} (resp. \emph{directly below}) the line segment $d_1((0,0), (k-1,0))$ if $0 \leq x \leq k-1$ and $y \geq 0$ (resp. $y \leq 0$).

We now consider the possible locations for the bottommost point $bot'$.
If $bot'$, is directly below $d_1$, $\meddiamond'$ is trivially 4-connected, so we only consider the situation where $bot'$ is not directly below $d_1$.
We forget the assumption made in Section~\ref{ss:top_part} about the position of $top'$ relative to $x_{mid}$. 
However, we still keep the assumption that states that $top'$ is furthest or equally furthest than $bottom'$ from the line $y=0$.
For simplicity, we assume that $bot'(x_b,y_b)$ is located to the right of $d_1$, that is when the $x$-coordinate of $bot'$ is larger than $k-1$. The situation where $bot'$ is located to the left of $d_1$ is symmetrically identical. We will not consider it.

As we are not in the pompom case, we know that $top'$ is located directly above $d_1$, hence its $y$-coordinate is at most $k-1$ as otherwise $d_1$ would not be a lattice diameter. 

In order to prove that $\meddiamond'$ is 4-connected, we will now prove that there is a horizontal shear mapping that maps $\meddiamond'$ to $\meddiamond''$ such that
\begin{itemize}
    \item $top''$ remains directly above $d_1$, which guarantees the connectivity of $\medtriangleup ''$
    \item The lattice points $p_i(x,y) = bot" + i \overrightarrow{(-1,1)}$ such that $i > 0, y < 0$ are all within $\medtriangledown''$
    \item For either $v=\overrightarrow{(-1,0)}$ or $v=\overrightarrow{(0,1)}$, the following is true for all $p_i(x,y)$ such that $i \neq 0, y \neq 0$: $p_i + v \in \meddiamond$ (See Figure~\ref{fig:what_we_prove} a).
\end{itemize} 

\begin{figure}[tb] 
    \begin{center}
        \includegraphics[scale=0.67]{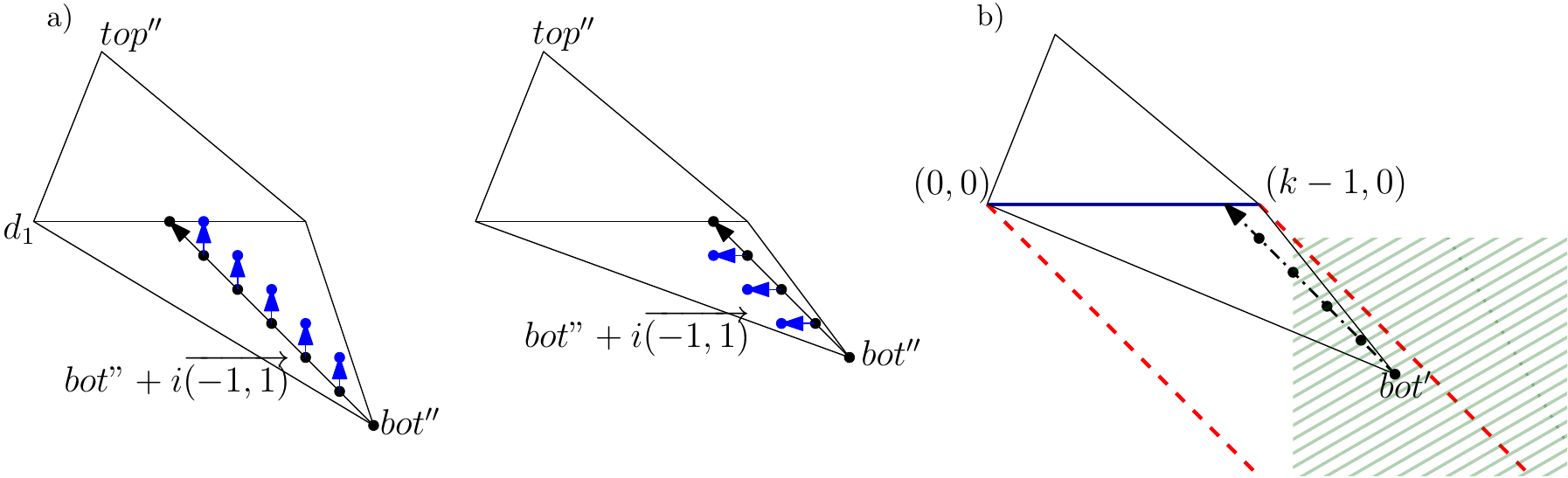}
        
    \end{center}
    \caption{a) The final set $\meddiamond''$ that we will obtain always contains the lattice points on the diagonal $bot" + i (-1,1)$ shown by the black arrow. $\meddiamond''$ will also contains, either the lattice points to the left or above the diagonal, shown in blue here.
    b) $bot'$ is contained in the green tiled region. As long as $bot'$ is located left of the line $y=k-1-x$, $\medtriangledown'$ contains the upper left diagonal from $bot'$, making $\medtriangledown'$ 8-connected.}
    \centering
    \label{fig:what_we_prove}
\end{figure}

In order for $\medtriangledown'$ to contain the lattice points $p_i(x,y)$ located on the diagonal $bot" + i \overrightarrow{(-1,1)}$ such that $i > 0$ and $y < 0$, $bot'(x_b,y_b)$ has to be located in between the two lines $y=-x$ and $y=k-1-x$. As we know that $x_b > k-x$, and $y_b \leq k-1$, this is equivalent to $bot'$ being located left of the line $y=k-1-x$ (See Figure~\ref{fig:what_we_prove} b).

First, we consider the situation where $bot'(x_b,y_b)=(k+\lambda,y_b)$ is in the region to the right of the line $y=k-1-x$, and we will show that there is an horizontal shear mapping that maps $bottom'$ to the left of the line $y=k-1-x$, and also maps $top'$ directly above $d_1$.
As $(k,0)$ is not in $\meddiamond'$, we know that $top'$ lies below the line supported by $(k,0)$ and $bot'$, that is the line $y = -\frac{y}{\lambda}+\frac{yk}{\lambda}$.
As $bot'$ is a lattice point right of the line $y=k-1-x$, $bot'$ is to the right of, or on, the line $y=k-x$. Hence, $y_b \geq \lambda$.
We call $m$ the integer such that $m < \frac{|y_b|}{\lambda} \leq m+1$.
The point $bot'$ is located in the wedge above $y = -\frac{1}{m}x+\frac{k}{m}$ and below $y = -\frac{1}{m+1}x+\frac{k}{m+1}$ (See Figure~\ref{fig:shift_from_right}).
We now apply to $\meddiamond'$ the horizontal shear mapping that maps $y = -\frac{1}{m}x+\frac{k}{m}$ to the vertical line $x=k$. This mapping, maps $y = -\frac{1}{m}x+\frac{k}{m}$ to the diagonal line $y = k - x$. Hence, after mapping $bot'$ is now located left of the line $y = k - x$.
As $top'$ is located left of the line $y=k-x$, after this mapping $top'$ still strictly lies to the left of the line $x=k$, and hence the triangle $(0,0), (k-1,0) top'$ is still 4-connected.

\begin{figure}[tb] 
    \begin{center}
        \includegraphics[scale=0.7]{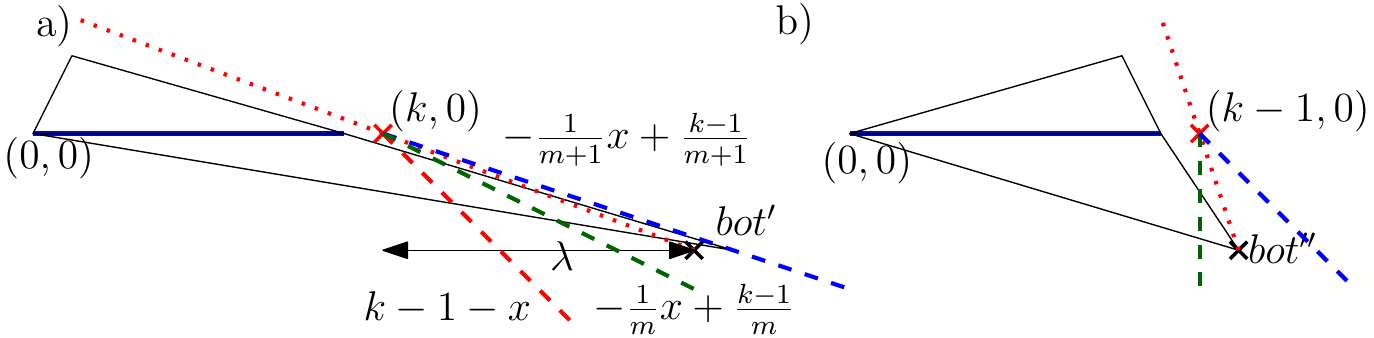}
    \end{center}
    \caption{a) $bot'$ is located in between the blue and red dashed lines, and $top'$ is located left to the red dotted line.
    b) Once the horizontal shear mapping is applied, $bot''$ is located left to the line $y=k-1-x$. As $top''$ is located left to the red dotted line, $top''$ is located left to the vertical line $x=k-1$. Hence $\medtriangleup''$ is 4-connected.}
    \centering
    \label{fig:shift_from_right}
\end{figure}

We will now consider the last case remaining, that is when $bot'$ is located left of, or on, the line $y=k-1-x$. To show that, in this situation, $\medtriangledown'$ is almost 4-connected we will use two lemmas. Lemma~\ref{l::left_connect} that explicits the location in which $\medtriangledown'$ contains the lattice points on the line $bot' + (-2,1) + i \overrightarrow{(-1,1)}$ (See, Figure~\ref{fig:contains_left} b)), and Lemma~\ref{l::top_connect} that explicits the location in which $\medtriangledown'$ contains the lattice points on the line $bot' + (-1,2) + i \overrightarrow{(-1,1)}$ (See, Figure~\ref{fig:contains_left} a).

\begin{lemma} \label{l::left_connect}
    Any triangle $\medtriangledown:(0,0), (l,0), p(x,y)$, with $l>0$, $y<0$ and $x>l>1$ is almost 4-connected when:
    $\frac{-x}{2} \leq y \leq l-x$, 
\end{lemma}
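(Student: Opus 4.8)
The plan is to build inside $\medtriangledown$ an explicit 4-connected ``staircase'' of lattice points that passes through every horizontal row of $\medtriangledown$ except the lowest, and to use the fact that each such row is automatically 4-connected (a horizontal slice of the convex set $\medtriangledown$ meets $\ZZ^2$ in a contiguous block), so that the staircase glues all the rows together; the bottom vertex $p$ will be the single point that is only 8-connected. Throughout I assume $p$ is a lattice point, which is the situation in which the lemma is used; the degenerate cases $y=-1$ and small $l$ are immediate, so I will not dwell on them.

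First I would extract what the two hypotheses give. Since $y<0$, the line of slope $-1$ through $p=(x,y)$ meets the base $\{(t,0):0\le t\le l\}$ at $(x+y,0)$, and $-\tfrac{x}{2}\le y\le l-x$ gives $0<\tfrac{x}{2}\le x+y\le l$, so $(x+y,0)$ lies on the base; by convexity the segment from $p$ to $(x+y,0)$ lies in $\medtriangledown$, hence so do the lattice points $q_i:=p+i\,\overrightarrow{(-1,1)}=(x-i,y+i)$ for $0\le i\le -y$, with $q_0=p$ and $q_{-y}=(x+y,0)$. Next I would show the ``left-shifted'' diagonal $q_i':=(x-1-i,y+i)$, $1\le i\le -y$, also lies in $\medtriangledown$: these are the lattice points of the segment joining $(x-2,y+1)$ and $(x+y-1,0)$, the latter being on the base since $1\le x+y-1\le l$, and the former lying in $\medtriangledown$. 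Verifying $(x-2,y+1)\in\medtriangledown$ is exactly where $-\tfrac{x}{2}\le y$ is used: it is precisely the condition that $(x-2,y+1)$ lies on the closed side of the edge $(0,0)(x,y)$ that contains $(l,0)$ (that side is $\{yX-xY\le0\}$, and $y(x-2)-x(y+1)=-(x+2y)\le0\iff x+2y\ge0$), while the remaining two side-conditions reduce to $y\le-1$ and, using $y\le l-x$, to $l-x-2y\ge x-l>0$. Convexity then puts the whole segment, hence all $q_i'$, in $\medtriangledown$.

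With both diagonals in $\medtriangledown$, the sequence $p=q_0,\,q_1,\,q_1',\,q_2,\,q_2',\dots,q_{-y}',\,q_{-y}$ takes the single diagonal step $p\to q_1$ and then alternates a horizontal step $q_i\to q_i'$ with a vertical step $q_i'\to q_{i+1}$; so $\{q_1,q_1',\dots,q_{-y}\}$ is 4-connected and $p$ is 8-adjacent to it. For each integer $j$ with $y+1\le j\le 0$ the row $\medtriangledown\cap\{Y=j\}\cap\ZZ^2$ is a contiguous block, hence 4-connected, and it contains $q'_{j-y}=(x+y-j-1,j)$ and $q_{j-y}=(x+y-j,j)$; and for $y+1\le j\le-1$ the point $q'_{j-y}=(x+y-j-1,j)$ sits directly below $q_{(j+1)-y}=(x+y-j-1,j+1)$, so rows $j$ and $j+1$ are 4-linked. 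Chaining these links shows $\bigcup_{j=y+1}^{0}\bigl(\medtriangledown\cap\{Y=j\}\cap\ZZ^2\bigr)$ is 4-connected. Finally, the bottom row $\medtriangledown\cap\{Y=y\}$ has width $0$ (the two edges of $\medtriangledown$ through $p$ meet there), so it is $\{p\}$, and $p$ has no lattice 4-neighbour in $\medtriangledown$: $(x\pm1,y)\notin\medtriangledown$ by the width-$0$ remark, $(x,y-1)\notin\medtriangledown$ trivially, and $(x,y+1)\notin\medtriangledown$ because the right edge at height $y+1$ is at $x+\tfrac{x-l}{y}<x$. Hence $\medtriangledown\cap\ZZ^2$ is 4-connected except for $p$, which is 8-connected to the rest, i.e., $\medtriangledown$ is almost 4-connected.

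I expect the main obstacle to be the delicate but elementary verification that $(x-2,y+1)$ lies inside $\medtriangledown$ and the attendant realization that, among the three edge inequalities, it is exactly $-\tfrac{x}{2}\le y$ that is needed, together with the bookkeeping that the two diagonals meet each row in $\{y+1,\dots,0\}$ in the asserted consecutive pair and that successive rows therefore share a column. Everything else — the convexity arguments, the width-$0$ computation for the bottom row, and the degenerate cases — is routine.
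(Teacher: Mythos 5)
Your proof is correct and follows essentially the same route as the paper: both arguments rest on showing that the diagonal $p+i\,(-1,1)$ and its left-shifted copy through $(x-2,y+1)$ lie in the triangle, with $y\le l-x$ guaranteeing the diagonal lands on the base and $-\tfrac{x}{2}\le y$ being exactly the condition that the left edge (equivalently the point $(x-2,y+1)$) does not cut off the shifted diagonal. Your write-up is somewhat more explicit than the paper's (verifying the half-plane inequalities and spelling out the row-by-row gluing that the paper leaves implicit), but the underlying idea is the same.
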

\begin{proof}
    We first consider the case where the point $p(l-y,y)$ is located on the line $y=l-x$. The right edge of $\medtriangledown$ is supported by the line $y=l-x$, and hence contains a lattice point for each integer $y$-coordinate the edge crosses. The horizontal width of $\medtriangledown$ at the $y$-coordinate $y+1$ is equal to $\frac{l}{y} \geq 1$. Hence both the points $(l-y-1,y+1)$ and $(l-y-2,y+1)$ are in $\medtriangledown$ and, the width of the triangle increasing with $y$-coordinate, the same reasoning can be done with the other points on the diagonal $p + k \overrightarrow{(-1,1)}$. \\
    We now study, how far to the left of the line $y=l-x$ can $p(x,y)$ move, such that $\medtriangledown$ still contains $(x-1,y+1)$ and $(x-2,y+1)$.
    We rewrite the coordinates of $p$ in the following manner: $p(l-y-\lambda, y)$.
    The left edge of the triangle is supported by the line $y = \frac{y}{l-y-\lambda}$.
    At height $y+1$ the left edge $x$-coordinate $x_l$ is equal to: $x_l = \frac{(y+1)(l-y-\lambda)}{y} = l-\lambda-1-y+\frac{l-\lambda}{y}$. \\
    As we want, $(l-y-\lambda-2, y+1)$ to be inside $\medtriangledown$, we need\\ $x_l \leq l-y-\lambda-2$. That is \\ $l-\lambda-1-y+\frac{l-\lambda}{y} < l-y-\lambda-2$ \\
    $\lambda \leq y+l$ \\
    As $\medtriangledown$ contains all points on the diagonal $p + k \overrightarrow{(-1,1)}$, $\medtriangledown$ is almost 4-connected when $0 \leq \lambda \leq y+l$. That means, considering $p(x,y)$, $\medtriangledown$ is connected if $-2y \leq x \leq l-y$. Which is equivalent to $\frac{-x}{2} \leq y \leq l-x$.
\end{proof}

\begin{lemma} \label{l::top_connect}
    The triangle $\medtriangledown:(0,0), (l,0), p(x,y)$, with $l>0$, $-l<y<0$ and $x>l>1$ is almost 4-connected when:
    $y \leq 2l-2x$.
\end{lemma}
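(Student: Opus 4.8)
The plan is to follow the template of the proof of Lemma~\ref{l::left_connect}, but to connect the bottom vertex $p$ of $\medtriangledown$ to its base edge $d_1$ (the segment from $(0,0)$ to $(l,0)$) through the lattice points lying \emph{directly above} the main diagonal of $\medtriangledown$ rather than those directly to its left; these are the points $p+(-1,2)+i\,\overrightarrow{(-1,1)}$ of Figure~\ref{fig:contains_left}a. Write $w=x-l\ge 1$, so $p=(l+w,y)$ with $|y|<l$, and set $q_i=p+i\,\overrightarrow{(-1,1)}=(x-i,y+i)$ for $i\ge 0$. The first step is to verify, exactly as in Lemma~\ref{l::left_connect}, that every $q_i$ with $0\le i\le |y|$ lies in $\medtriangledown$: the diagonal leaves $p$ with slope $-1$; it stays on or left of the right edge from $(l,0)$ to $p$ precisely when $|y|\ge w$, which the hypothesis $y\le 2l-2x$ (that is, $|y|\ge 2w$) certainly implies; it stays on or right of the left edge from $(0,0)$ to $p$ because $|y|<l<x$; and it reaches $d_1$ at the lattice point $(x+y,0)$.

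The core computation bounds the points $q_i+\overrightarrow{(0,1)}=(x-i,y+i+1)$. The left edge again poses no difficulty (the constraint it imposes follows from $|y|<x$), so the binding constraint is the right edge, namely the line $X=l+\frac{Y}{y}w$ through $(l,0)$ and $p$. At height $Y=y+i+1$ this edge has abscissa $l+\frac{y+i+1}{y}w$, and the requirement $x-i\le l+\frac{y+i+1}{y}w$ simplifies, recalling that $y<0$, to $i\,|y|\ge (i+1)w$. For $i=1$ this reads $|y|\ge 2w$, precisely the hypothesis $y\le 2l-2x$; and since $(i+1)/i$ is decreasing in $i$, the same hypothesis yields the inequality for all $i\ge 1$. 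Hence $q_i+\overrightarrow{(0,1)}\in\medtriangledown$ for every $i$ with $1\le i\le |y|-1$, the last of these points, $(x-|y|+1,0)$, lying on $d_1$.

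It then remains to assemble the connectivity statement as in Lemma~\ref{l::left_connect}. Each $q_i+\overrightarrow{(0,1)}$ is 4-adjacent to both $q_i$ and $q_{i+1}$, so $q_1,\,q_1+\overrightarrow{(0,1)},\,q_2,\,q_2+\overrightarrow{(0,1)},\dots$ is a 4-connected staircase ending on $d_1$; and since $\medtriangledown$ is convex, at every height $Y=y+i$ with $1\le i\le |y|-1$ its lattice points form a contiguous horizontal run, which contains $q_i$ and is therefore 4-connected to the staircase. Consequently every lattice point of $\medtriangledown$ other than $p=q_0$ is 4-connected to $d_1$, whereas $p$ is the unique lattice point of $\medtriangledown$ at height $y$ and is only 8-connected to $d_1$, via $q_1$; thus $\medtriangledown$ is almost 4-connected. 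The main point requiring care, exactly as in the companion lemma, is to resist over-claiming full 4-connectivity: in fact $p$ has no 4-neighbour in $\medtriangledown$, since the right edge at height $y+1$ has abscissa $x-w/|y|<x$, so $p+\overrightarrow{(0,1)}\notin\medtriangledown$. Checking that the staircase inequality $i\,|y|\ge (i+1)w$ persists over the whole range $1\le i\le |y|-1$, rather than only near $p$, is the other place where a slip would be easy.
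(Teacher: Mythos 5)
Your proof is correct and follows essentially the same route as the paper: the binding constraint is the right edge of $\medtriangledown$, and your $i=1$ inequality $i\,|y|\ge (i+1)w$ is exactly the paper's computation for the point $p+(-1,2)$, yielding $\lambda\le -y/2$, i.e.\ $y\le 2l-2x$. You simply make explicit what the paper leaves implicit (propagation of the condition up the diagonal via the monotonicity of $(i+1)/i$, and the staircase-plus-horizontal-runs assembly of almost 4-connectivity), which is a faithful elaboration rather than a different approach.
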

\begin{proof}
    We first consider the case where the point $p(l,y)$ is located on the line $x=l$.
    In this situation, it is clear that $(l-1,y+2)$ is in $\medtriangledown$. Now we want to know, how far to the right of the line $x=l$ can $p(x,y)$ be located such that $(x-1,y+2)$ is in $\medtriangledown$.
    We rewrite the coordinates of $p$ in the following manner: $p(l+\lambda, y)$.
    The right edge of $\medtriangledown$ is located on the line $y=\frac{y}{\lambda}x-\frac{yl}{\lambda}$. As we want $(x-1,y+2)$ in $\medtriangledown$, we want the right edge to intersect the line $ y = l+\lambda-1$ above $y+2$. That is \\
    $\frac{y}{\lambda}(l+\lambda-1)-\frac{yl}{\lambda} \geq y+2$ \\
    $\frac{y(\lambda-1)}{\lambda} \geq y+2$ \\
    $\lambda \leq \frac{-y}{2}$ \\
    That means, considering $p(x,y)$, $\medtriangledown$ is connected when $ x \leq l - \frac{y}{2}$. Which is the same as $y \leq 2l-2x$.
\end{proof}

\begin{figure}[tb] 
    \begin{center}
         \includegraphics[scale=0.629]{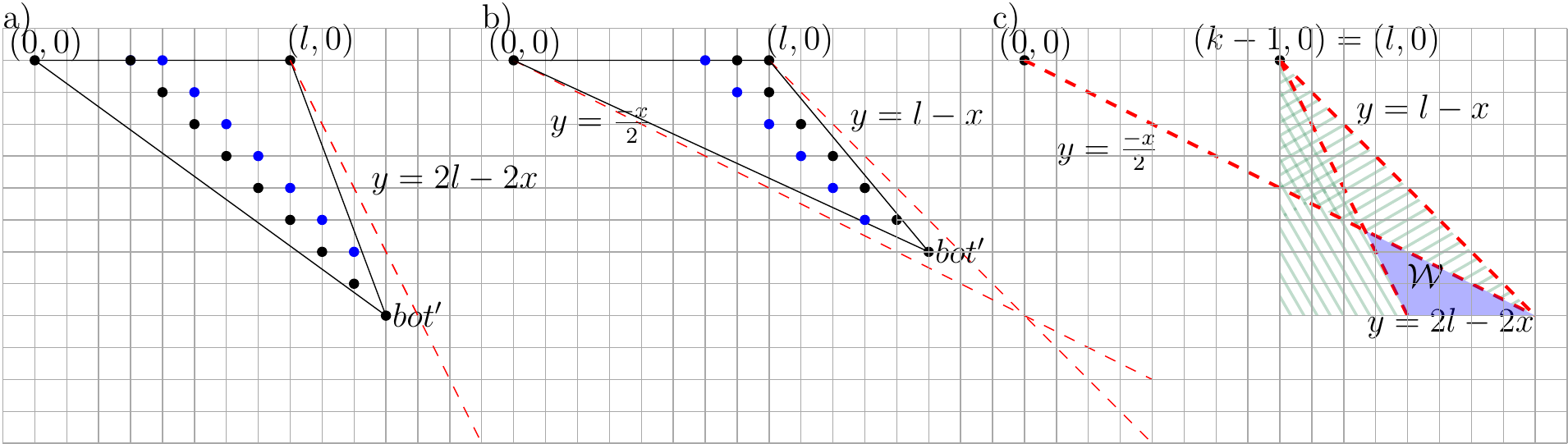}
    \end{center}
    \caption{a) If $bot'$ is located left of the dashed red line, then $\medtriangledown'$ contains the black diagonal, and the blue one just above it.
    b) If $bot'$ is located in the top wedge in between the two dashed red line, then $\medtriangledown'$ contains the black diagonal, and the blue one just left of it. c) If $bot'$ is located in one of the green tiled surfaces, then $\medtriangledown'$ is almost 4-connected. That is not true when $bot'$ is located in the blue wedge $\mathcal{W}$}
    \centering
    \label{fig:contains_left}
\end{figure}

We now consider the last surface in which $bot'$ can be located such that $\medtriangledown'$ is not almost 4-connected. That is the wedge $\mathcal{W}$ located below $y=-\frac{x}{2}$ and above $y=2k-2-2x$ (See figure~\ref{fig:contains_left} c).

To study the situation where $bot'$ is located inside $\mathcal{W}$ we have to consider multiple possible location for $top'$. If $top'$ is located below the line $y = k-1-x$, then we can apply a horizontal shearing that maps $\mathcal{W}$ to a surface that is directly below $d_1$, and maps $\medtriangleup'$ to a 4 connected set. If $top'$ is located above the line $y = k-1-x$ and above the line $y=x$, then the lattice points inside the triangle $\medtriangleup_{top} (0,0),(k-1,0), (\frac{k-1}{2},\frac{k-1}{2})$ are in $\meddiamond'$, and since $\medtriangleup_{top} + k \overrightarrow{(1,-1})$ covers $\mathcal{W}$, $bot'$ cannot be located inside $\mathcal{W}$ with $d_1$ being a lattice diameter.
Finally, $top'$ cannot be located below the line $y=x$ as the fact that $top'$ is the furthest point from the line $y=0$ and that the point $(k,0)$ is not in $\meddiamond'$ makes it impossible for $top'$ to be located below the line $y=x$.
A more detailed proof of this affirmation is available in appendix~\ref{app:un}.
With $\mathcal{W}$ covered, we now have considered all possibilities for the location of $bot'$ and found in each case a mapping that maps $\meddiamond'$ to an almost 4-connected set. Furthermore, in the event where $\meddiamond'$ is not mapped to 4-connected, the point that is not 4-connected to the set is $bot''$ that image of $bot'$.

\subsection{Back to the general case} \label{ss:general_case}

We now show how to obtain a mapping that maps $S$ to an almost 4-connected set from the mapping $\mathcal{M}$ we previously described that maps $\meddiamond$ to $\meddiamond''$, an almost 4-connected set. Using Lemma~\ref{le:quadri} we can discard the case where $\meddiamond''$ is 4-connected, and using the same arguments as in the proof of Lemma~\ref{le:quadri} we can conclude that the only points that might not be 4-connected in $S''$ are the one that have the same $y$-coordinate as $bot''$.
As $bot''$ is not 4-connected, it means that $bot''$ is located right of the line $x = k-1$, in this situation we showed that $\meddiamond''$ contains the lattice point $bot'' + (-1,1)$. As a consequence, adding a point to the left of $bot''$ would make $S''$ 4-connected.
Now, if we add the point $p_a = bot'' + (1,0)$, either $p_a$ is below or on the line $y = k-1-x$, and hence $S''$ contains $p_a  + (-1,1)$, which is the point above $bot''$ which makes $S''$ 4-connected, or $p_a$ is located exactly on the line $y = k-x$. As a consequence, since $S''$ does not contain $(k,0)$, $top''$ is located left to the line $y=k-x$ which means we can apply to $S''$ the horizontal shearing that maps $y=k-x$ to $x=k$ in order to obtain a 4-connected set.

\section{Algorithm}
In this section we study the algorithmic complexity of finding an almost 4-connected unimodularly equivalent set to a digital convex set $S$ of $n$ points and of diameter $r$.
We propose an algorithm that runs in $O(n + h \log r) = O(n + n^\frac{1}{3} \log r)$ time, where $h$ is the number of vertices on the convex hull of $S$. 
This algorithm mimics the construction done in section~\ref{s:eq}, and hence relies on the computation of a lattice diameter of $S$.

\subsection{Computing the lattice diameter}

We present here an algorithm to compute a lattice diameter of a digital convex set $S$ in $O(n + h \log r)$ time, or $O(h\sqrt{n} + h \log r)$ when the convex hull of $S$ is known. This algorithm relies on the fact that at least one of the vertices of the convex hull of $S$ is located on a lattice diameter~\cite{BaP92,BaF01}.
Hence, we only have to consider vertices of the convex hull to find a lattice diameter.
We consider $v$, a vertex of $\conv(S)$ such that $v$ is on a lattice diameter $d$ of $S$. We now consider the fan triangulation $\mathcal{T}$ rooted on $v$ of $\conv(S)$. That is the triangulation defined by all the diagonals going from the vertex $v$ to all of the other vertices of $\conv(S)$.
There is a triangle $t_1$ in $\mathcal{T}$ that contains $d$ (See Figure~\ref{fig:triangle_diameter} a). The base of $t_1$ is an edge of $\conv(S)$, and its opposite vertex is $v$.
Hence, in order to find a lattice diameter of $S$ we can test each triangle of each of the $h$ fan triangulations. In each of those triangles $t_1(v_1,v_2,v_3)$, we want to compute a line $\ell$ such that $v_1$ is located on $\ell$, and $\ell$ maximizes $|\ell \cap t_1 \cap \ZZ^2|$.
In order to compute such a line, we apply to $t_1$ a unimodular affine transformation that maps the line supported by $v_2 v_3$ to a horizontal line (See Figure~\ref{fig:triangle_diameter} b).
The three vertices of $t_1$: $v_1, v_2$ and $v_3$ are now mapped to $v'_1(x'_1,y'_1)$, $v'_2(0,0)$ and $v'_3(x'_3,0)$, the three vertices of $t'_1$.
For simplicity, we assume that $v'_1$ is above $v'_2$ and $v'_3$.
We now consider any line $\ell$ that goes through $v'_1$ and intersects the segment $v'_2 v'_3$, more specifically we consider the line segment supported by $\ell$ that is located inside $t'_1$ $\ell_s = \ell \cap t'_1$.
As we mapped $v'_2 v'_3$ to an horizontal line, the number of horizontal lines $y = i, i \in \ZZ$ intersected by $\ell_s$ is always equal to the constant number $y'_1 - y'_3 + 1$.
As lattice points on a line are evenly separated, and since $v'_1$ is a lattice point, maximizing the number of lattice points on $\ell_s$ is equivalent to finding $l_t(x_t,y_t) \neq v'_1$, the top most lattice point in $t'_1$. The number of lattice points in $t'_1$ on the line $v'_1 l_t$ is equal to $\floor{\frac{y'_1}{y'_1-y_t}}+1$ (See Figure~\ref{fig:triangle_diameter} b).

\begin{figure}[tb] 
    \begin{center}
        \includegraphics[scale=0.7]{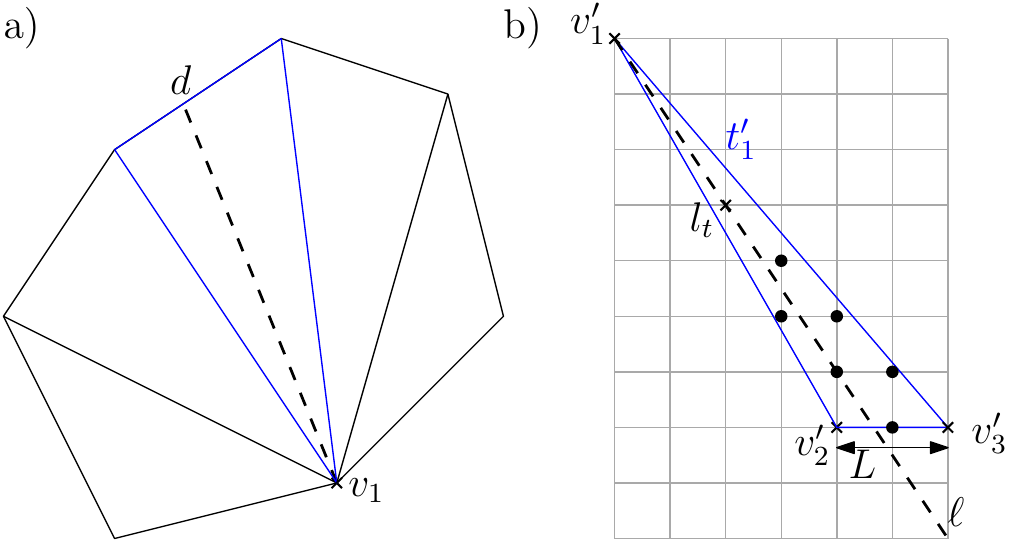}
    \end{center}
    \caption{a) The lattice diameter $d$ that goes through $v_1$ is in one of the triangles of the fan triangulation of $S$ from $v_1$. This triangle is shown in blue.
    b) The triangle is mapped to a triangle where $v'_2 v'_3$ is a horizontal line. In this triangle, finding $\ell$ (the longest digital segment in $t'_1$ going through $v'_1$) is equivalent to finding $l_t$ the topmost lattice point in $t'_1 \cap \ZZ^2 \setminus v'_1$.}
    \centering
    \label{fig:triangle_diameter}
\end{figure}

Naively looking for $l_t$ by testing the intersection of all horizontal lines with $t'_1$ starting from $y'_1$ to $0$ leads to at most $y'_1 + 1$ tests.
We now show that $y'_1 \leq 2n$. Indeed, the area of $t'_1$ is equal to $\frac{(y'_1)x'_3}{2}$, and we now that $x'_3 \geq 1$.
Using Pick's formula~\cite{Pic99}, that states that in a lattice triangle $\triangle$ we have the following equality: $A = i + \frac{b}{2} - 1$ where $A$ is the area of $\triangle$, $i$ is the number of lattice points strictly inside $\triangle$ and $b$ is the number of lattice points on the edges of $\triangle$, we can deduce that the number of lattice points inside $t'_1$ $n_t = t'_1 \cap \ZZ^2$ is at least equal to $\frac{y'_1}{2}$.
This means that $l_t$ can be found in $O(n_t)$ time.
However, using a lower bound $\lambda$ on $n_d$ the number of lattice points on a lattice diameter of $S$, we can stop computation in $t'_1$ before finding $l_t$ after $\frac{y'_1}{\lambda - 1}$ steps. Indeed, any below that would be too far from $v'_1$ to result in a lattice diameter of $S$ as the segment would contain less than $\dfrac{y'_1}{\frac{y'_1}{\lambda - 1}}+1 = \lambda$ lattice points.
We use a very rough lower bound to $n_d$ that can be directly deduced from the construction we made in order to prove Theorem~\ref{th:1} by fattening the bounding box of $\meddiamond''$ by $n_d$ both to the left and to the right. The resulting bound is the following: $\frac{1}{8} \sqrt{n} \leq n_d$. 

We now consider the total number of operations needed to compute a lattice diameter of $S$.
For a given vertex  $v$ of $\conv(S)$, we denote $(n_1, n_2, ..., n_i, ... n_h)$ the number of lattice points inside each of the $h$ triangles of the fan triangulation rooted on $v$.
As $v$ is in all the $h$ triangles, and any other lattice point can only be in at most 2 triangles, $\displaystyle \sum_i n_i \leq 2n + h \leq 3n$.
In addition to the computation of all the $h$ unimodular affine transformations required in order to map the edges of $\conv(S)$ to horizontal lines, computing a potential lattice diameter that goes through a given vertex $v$ takes at most, $\displaystyle \sum_i \dfrac{2 n_i}{\frac{1}{8}\sqrt{n}-1} \leq \dfrac{6n}{\frac{1}{8}\sqrt{n}-1} = O(\sqrt{n})$ time.
Repeating this process for all the $h$ vertices of $\conv(S)$ takes at most $O(h \sqrt{n})$ time.
Now, adding to this the computing time required to compute the $h$ unimodular affine transformations that map each edge of $\conv(S)$ to a horizontal line, we obtain a time complexity of $O(h \sqrt{n} + h \log r)$ to compute the lattice diameter of $S$, given its convex hull.

Computing the convex hull of a digital convex set can be done in linear time using the quickhull algorithm~\cite{CFG19}, and since there is at most $O(n^{1/3})$ vertices on $\conv(S)$~\cite{Zun98,BaI08} the total time complexity of the algorithm in order to compute a lattice diameter of a digital convex set is $O(n + h \log r) = O(n +n^\frac{1}{3} \log r)$.

\subsection{Computing a unimodularly equivalent almost 4-connected set}

Once the lattice diameter computed, the computation of an affine isomorphism of $\ZZ^2$ resulting in an almost 4-connected set requires:
\begin{itemize}
    \item The computation of the affine isomorphism mapping $d$ to a horizontal line in $O(\log r)$ time, where $r$ is the diameter of $S$.
    \item Applying the mapping to $conv(S)$ in $O(h)$ time.
    \item Computing $\meddiamond$ in $O(h)$ time.
    \item Computing the horizontal shear mapping positioning the topmost point in $O(1)$ time.
    \item Computing a horizontal shear mapping from the positions of the top most and bottom most points in order to make $\meddiamond'$ 4-connected in $O(1)$ time.
    \item applying all the mappings to $conv(S)$ in $O(h)$ time.
    \item Eventually computing one last shear mapping in the event where $S$ does not map to an almost 4-connected set in $O(h)$.
\end{itemize}

Hence, the total time complexity sums to $O(n)$. Adding the time complexity of the lattice diameter algorithm results to a time complexity of $O(n + h \log r)$ time in order to compute a unimodularly equivalent almost 4-connected set.

\section{Perspective}
In this paper, in 2 dimension, we proved for every digital convex set the existence of a unimodularly equivalent almost 4-connected set.
While it is proven that an infinite amount of digital convex sets cannot be mapped to unimodularly equivalent 4-connected sets, the algorithm proposed in this paper does not ensure to provide a 4-connected unimodularly equivalent set when such a set exists.

\bibliographystyle{unsrt}
\bibliography{sample}

\begin{thebibliography}{10}

\bibitem{KlR04}
Reinhard Klette and Azriel Rosenfeld.
\newblock {\em Digital geometry: Geometric methods for digital picture
  analysis}.
\newblock Elsevier, 2004.

\bibitem{Ro89}
Christian Ronse.
\newblock A bibliography on digital and computational convexity (1961-1988).
\newblock {\em IEEE Transactions on Pattern Analysis and Machine Intelligence},
  11(2):181--190, February 1989.

\bibitem{KR82}
Chul~E. Kim and Azriel Rosenfeld.
\newblock Digital straight lines and convexity of digital regions.
\newblock {\em IEEE Transactions on Pattern Analysis and Machine Intelligence},
  4(2):149--153, 1982.

\bibitem{KR82-2}
Chul~E. Kim and Azriel Rosenfeld.
\newblock Convex digital solids.
\newblock {\em {IEEE} Trans. Pattern Anal. Mach. Intell.}, 4(6):612--618, 1982.

\bibitem{Cha83}
Jean-Marc Chassery.
\newblock Discrete convexity: Definition, parametrization, and compatibility
  with continuous convexity.
\newblock {\em Computer Vision, Graphics, and Image Processing}, 21(3):326 --
  344, 1983.

\bibitem{Ki96}
Kazuo Kishimoto.
\newblock Characterizing digital convexity and straightness in terms of length
  and total absolute curvature.
\newblock {\em Computer Vision and Image Understanding}, 63(2):326 -- 333,
  1996.

\bibitem{ChR98}
Bidyut~Baran Chaudhuri and Azriel Rosenfeld.
\newblock On the computation of the digital convex hull and circular hull of a
  digital region.
\newblock {\em Pattern Recognition}, 31(12):2007 -- 2016, 1998.

\bibitem{BDNP96}
Elena Barcucci, Alberto~Del Lungo, Maurice Nivat, and Renzo Pinzani.
\newblock Reconstructing convex polyominoes from horizontal and vertical
  projections.
\newblock {\em Theoretical Computer Science}, 155(2):321--347, 1996.

\bibitem{Da01}
Alain Daurat.
\newblock Salient points of q-convex sets.
\newblock {\em International Journal of Pattern Recognition and Artificial
  Intelligence}, 15(7):1023--1030, 2001.

\bibitem{HaN12}
Christian Haase, Benjamin Nill, and Andreas Paffenholz.
\newblock Lecture notes on lattice polytopes.
\newblock {\em Fall School on Polyhedral Combinatorics}, 2012.

\bibitem{BaP92}
Imre B{\'a}r{\'a}ny and J{\'a}nos Pach.
\newblock On the number of convex lattice polygons.
\newblock {\em Combinatorics, Probability and Computing}, 1(4):295--302, 1992.

\bibitem{BaF01}
Imre Bárány and Zoltán Füredi.
\newblock On the lattice diameter of a convex polygon.
\newblock {\em Discrete Mathematics}, 241(1):41 -- 50, 2001.

\bibitem{Pic99}
Georg Pick.
\newblock Geometrisches zur zahlenlehre.
\newblock {\em Sitzungsberichte des Deutschen
  Naturwissenschaftlich-Medicinischen Vereines für Böhmen "Lotos" in Prag.},
  v.47-48 1899-1900, 1899.

\bibitem{CFG19}
Lo{\"\i}c Crombez, Guilherme~D da~Fonseca, and Yan G{\'e}rard.
\newblock Efficient algorithms to test digital convexity.
\newblock In {\em International Conference on Discrete Geometry for Computer
  Imagery}, pages 409--419. Springer, 2019.

\bibitem{Zun98}
Jovi{\v{s}}a {\v{Z}}uni{\'c}.
\newblock Notes on optimal convex lattice polygons.
\newblock {\em Bulletin of the London Mathematical Society}, 30(4):377--385,
  1998.

\bibitem{BaI08}
Imre B{\'a}r{\'a}ny.
\newblock Extremal problems for convex lattice polytopes: a survey.
\newblock {\em Contemporary Mathematics}, 453:87--104, 2008.

\end{thebibliography}

\newpage

\begin{subappendices}
\renewcommand{\thesection}{\Alph{section}}%

\section{Study of the case where $bot'$ is located in the wedge $\mathcal{W}$}\label{app:un}

Here, we consider what happens when $bot'$ is located in the wedge $\mathcal{W}$ defined by the surface below $y=-\frac{x}{2}$ and above $y=2k-2-2x$.
We consider 3 possible location for $top'$.
\begin{itemize}
    \item below or on the line $y = k-1-x$
    \item above or on both lines $y= k-1-x$ and $y = x$
    \item strictly below the line $y = x$
\end{itemize}

\paragraph{Case 1: $top'$ located below $y = k-1-x$\\}
In this situation the horizontal shear mapping that transforms $y = k-1-x$ to the line $y = k-1$ moves $bot'$ directly below $d$ while $top'$ remains directly above $d$. After this shear mapping, the set is hence 4-connected (See Figure~\ref{fig:wedge_left}).

\begin{figure}[tb] 
    \begin{center}
        \includegraphics[scale=0.7]{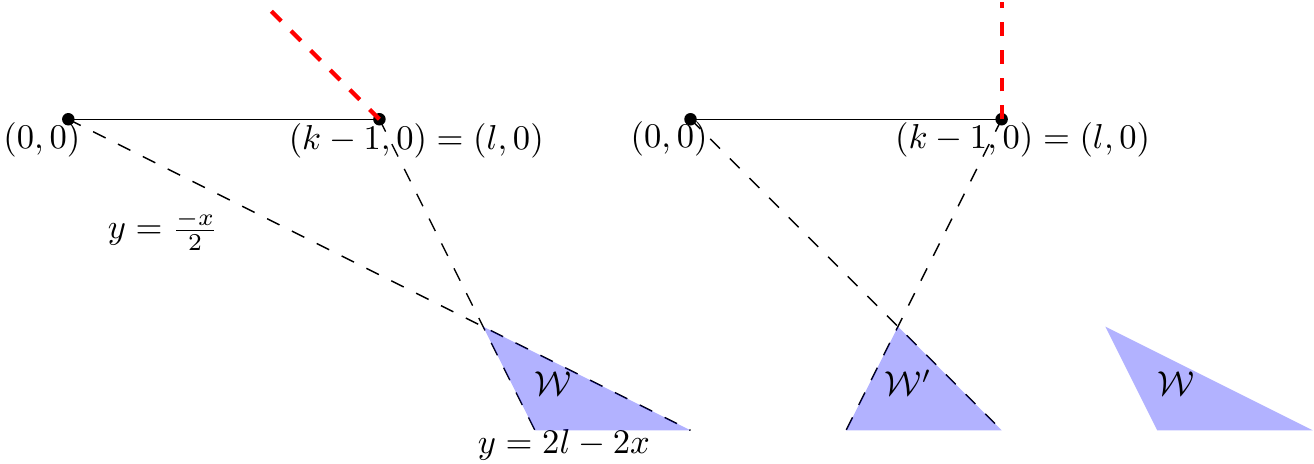}
    \end{center}
    \caption{$top'$ being below the red dashed line, we can apply a horizontal shear mapping to $\meddiamond$ such that the wedge $\mathcal{W}$ maps entirely directly below the lattice diameter $d_1$. This makes $\meddiamond$ 4-connected.}
    \centering
    \label{fig:wedge_left}
\end{figure}

\paragraph{Case 2: $top'$ located above or on $y= k-1-x$ and $y = x$\\}
In this situation, all the lattice points inside the triangle $\triangle_{top} (0,0), (k-1,0), (\frac{k-1}{2},\frac{k-1}{2})$ are part of $\meddiamond'$ (See Figure~\ref{fig:wedge_top} a).
Since $\meddiamond'$ is digitally convex, for any points $p_i$ in $\triangle_{top} \cap \ZZ^2$, the point $p_i + k \overrightarrow{(1,-1)}$ cannot be in $\meddiamond'$, else $d$ would no longer a lattice diameter (See Figure~\ref{fig:wedge_top} b).
As a consequence, $bot'$ cannot be located in the triangle $\medtriangledown'(k,-k), (2k-1,-k), (\frac{3k-1}{2},\frac{-k-1}{2})$.

\begin{figure}[tb] 
    \begin{center}
        \includegraphics[scale=0.7]{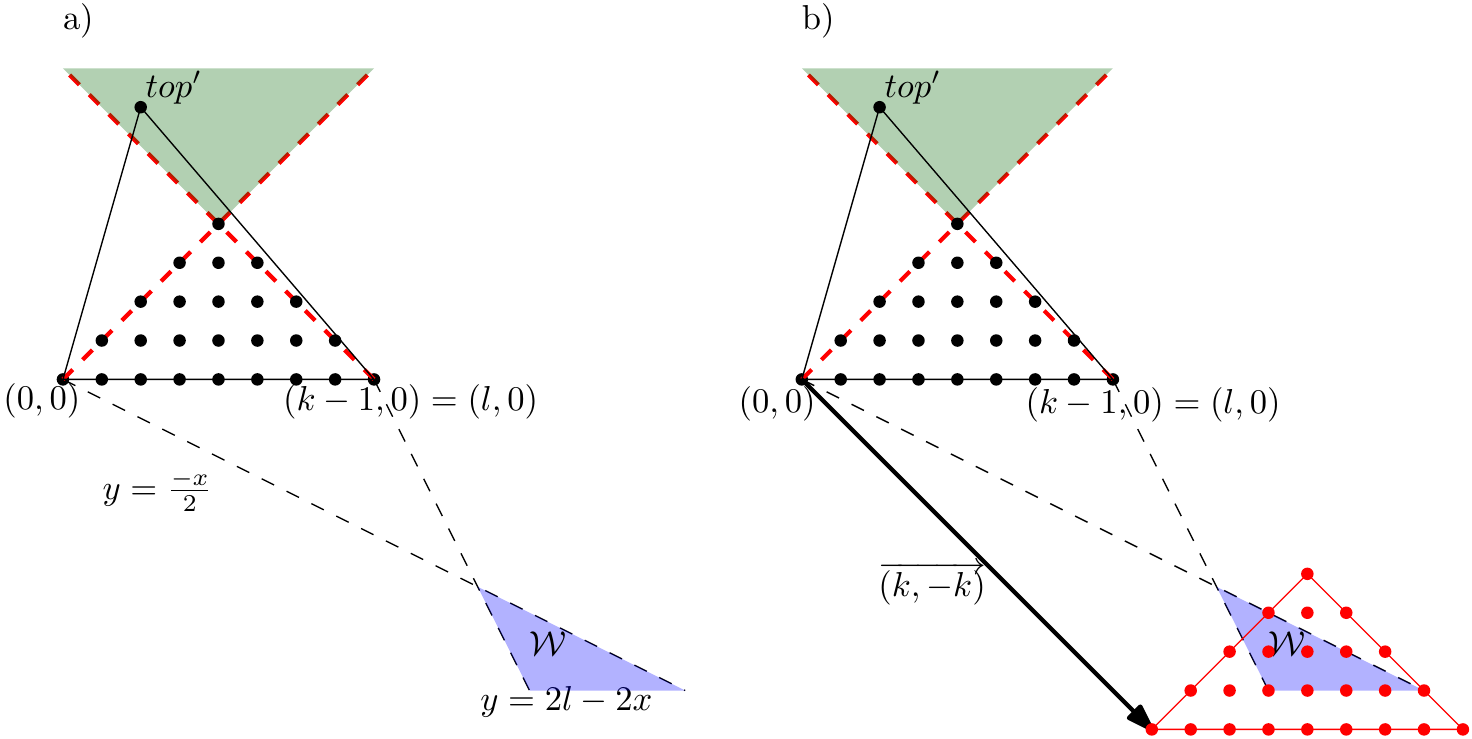}
    \end{center}
    \caption{a) $top'$ being located in the green wedge, all the black lattice points are in $\medtriangleup'$. 
    b) The lattice diameter of $\meddiamond$ containing only $k$ points, none of the red points can be located in $\medtriangledown'$.}
    \centering
    \label{fig:wedge_top}
\end{figure}

This leaves a triangle in which $bot'$ can still be located. This triangle is; above the line $y = x - 2k$, above the line $y = 2k-2-x$ and below the line $y = \frac{-x}{2}$.
We will now show that there is no lattice point located inside this triangle, and that hence $bot'$ cannot be located inside it.
The three vertices of the triangle are: 
$(\frac{4k-4}{3},\frac{-2k+2}{3})$, 
$(\frac{4k}{3},\frac{-2k}{3})$, 
$(\frac{4k-2}{3},\frac{-2k-2}{3})$.
We consider all the lattice points inside the square $x(,y) \in ]\frac{4k-4}{3},\frac{4k}{3}[ \times ]\frac{-2k-2}{3},\frac{-2k+2}{3}[$
There can only be three lattice points inside this square depending on the value of $k \pmod 3$. Those points are:
$(\frac{4k-3}{3},\frac{-2k}{3})$ which is located on the line $y=2k-2-2x$,
$(\frac{4k-2}{3},\frac{-2k+1}{3})$ which is located on the line $y=\frac{-x}{2}$, and
$(\frac{4k-1}{3},\frac{-2k-1}{3})$ which is located on the line $y=x-2k$.
This means, that if $top'$ is located above or on the lines $y = x$ and $y = k-1-x$, then $bot'$ cannot be located inside the wedge $\mathcal{W}$.

\paragraph{Case 3: $top'$ located strictly below $y = x$\\}
We now show, that $top'$ cannot lie to the right of the line $y=x$.
For simplicity, we denote $l=k-1$.
The intersection point of the two lines defining the wedge $\mathcal{W}$ is: $(t(\frac{4}{3}l,-\frac{2}{3}l))$.

We consider three different cases, either $l \equiv 0 \pmod 3$, or, $l \equiv 1 \pmod 3$, or $l \equiv 2 \pmod 3$.

\paragraph{Case 3.1: $l \equiv 0 \pmod 3$}
Amongst all points located strictly within the wedge $\mathcal{W}$ the topmost point is $t_w(\frac{4}{3}l+1,-\frac{2}{3}l-1)$ (See Figure~\ref{fig:wedge_right} b).
This means that $top'$, is above or on the line $y = \frac{2}{3}l+1$ (See Figure~\ref{fig:wedge_right} a).
We now look for the lattice point $p_l$ strictly within $\mathcal{W}$ such that all other lattice points in the wedge lies right to the line $\ell$ supported by $(k,0)$ and $p_l$.
As $(k,0)$ is not in $\meddiamond'$, we know that $top'$ strictly lies to the left of  $\ell$. 
This point is $p_l(\frac{4}{3}l+1,-\frac{2}{3}l-1)$ (See Figure~\ref{fig:wedge_right} b).
The equation of $ell$ is $y = (-2-\frac{3}{l})x+(2+\frac{3}{l})(l+1)$.
$\ell$ intersects the line $y = \frac{2}{3}l+1$ at the point $'\frac{2}{3}l+1,\frac{2}{3}l+1)$. Hence, when $l \equiv 0 \pmod 3$ no lattice point lies strictly left to $\ell$, above or on $y = \frac{2}{3}l+1$, and strictly right to $y = x$, and hence $top'$ cannot be located below the line $y = x$.

\begin{figure}[tb] 
    \begin{center}
        \includegraphics[scale=0.645]{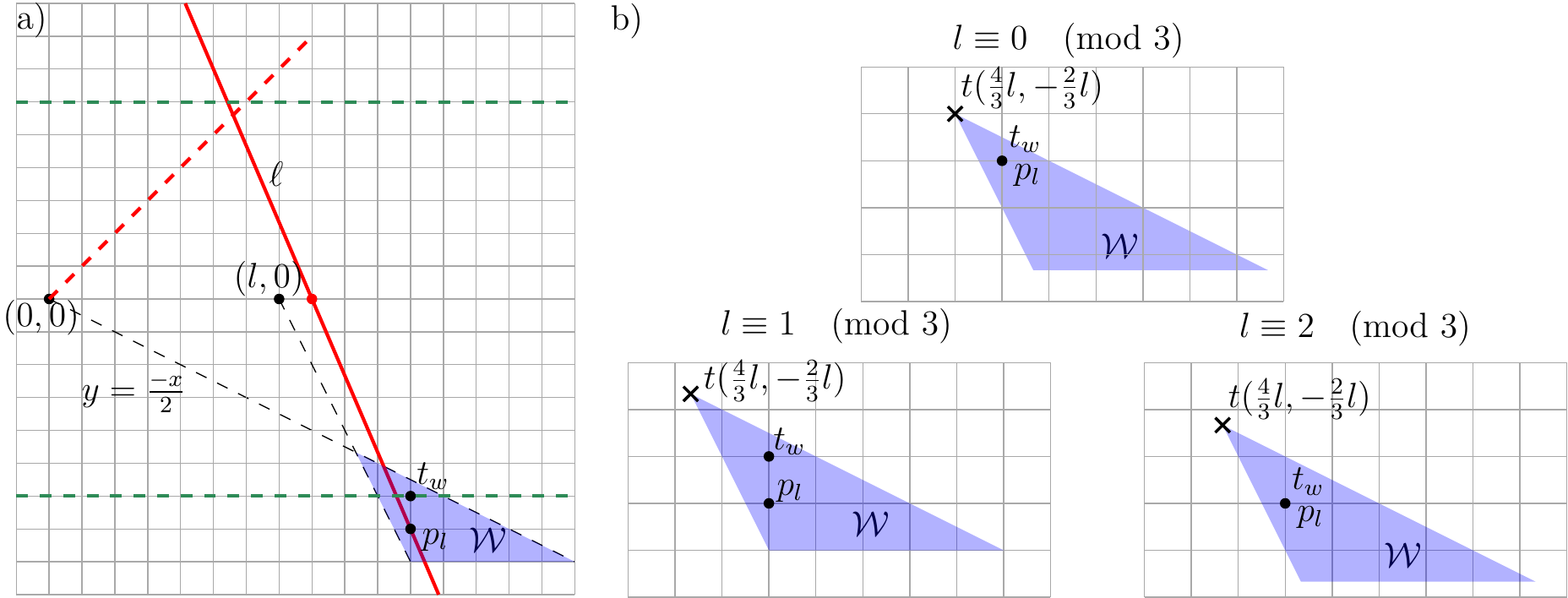}
    \end{center}
    \caption{a) As $bot'$ is located below the bottom green dashed line, $top'$ is located above the top green dashed line. As the lattice points in $\mathcal{W}$ are located right to $\ell$, ans as $\meddiamond'$ does not contain the red lattice point $(l+1,0)$, then $top'$ has to be located right of $\ell$.
    No lattice points are located, above $y=x$, to the left of $\ell$ and above the green dashed line. This situation is impossible.
    b) The location of $t_w$ (the topmost lattice point strictly within the wedge $\mathcal{W}$) and $p_l$ (the lattice point such that $\mathcal{W} \cap \ZZ^2$ lies right to the line $(l+1,0) p_l$)}
    \centering
    \label{fig:wedge_right}
\end{figure}

\paragraph{Case 3.2: $l \equiv 1 \pmod 3$}
Amongst all points located strictly within the wedge $\mathcal{W}$ the topmost point is $t_w(\frac{4l+5}{3},\frac{-2l-4}{3})$ (See Figure~\ref{fig:wedge_right} b).
This means that $top'$, is above or on the line $y = \frac{2l+4}{3}$.
We now look for the lattice point $p_l$ strictly within $\mathcal{W}$ such that all other lattice points in the wedge lies right to the line $\ell$ supported by $(k,0)$ and $p_l$.
As $(k,0)$ is not in $\meddiamond'$, we know that $top'$ strictly lies to the left of  $\ell$. (See Figure~\ref{fig:wedge_right} a).
This point is $p_l(\frac{4l+5}{3},\frac{-2l-7}{3})$ (See Figure~\ref{fig:wedge_right} b).
We now study the intersection point of $\ell$ with the line $x = \frac{2l+4}{3}$. If this intersection point is below the $y$-coordinate $\frac{2l+4}{3}$ then we know that no lattice point lies strictly left to $\ell$, above or on $y = \frac{2l+4}{3}$, and strictly right to $y = x$.
$\ell$'s equation is $y = \frac{2l+7}{l+2}(l+1-x)$. At $x=\frac{2l+4}{3}$ that is $\frac{(l-1)(2l+7)}{3l+6}$. We now compare that to $\frac{2l+4}{3}$: \\
$\frac{(l-1)(2l+7)}{3l+6} - \frac{2l+4}{3}$ =
$\frac{(l-1)(2l+7)}{3l+6} - \frac{(2l+4)(l+2)}{3(l+2)}$ =
$\frac{-l-1}{3l+6} < 0$.
Which proves that if $l \equiv 1 \pmod 3$ $top'$ cannot be located below the line $y = x$.

\paragraph{Case 3.3: $l \equiv 2 \pmod 3$\\}
Amongst all points located strictly within the wedge $\mathcal{W}$ the topmost point is $\mathcal{W}$ is $t_w(\frac{4l+4}{3},\frac{-2l-5}{3})$ (See Figure~\ref{fig:wedge_right} b).
This means that $top'$, is above or on the line $y = \frac{2l+5}{3}$.
We now look for the lattice point $p_l$ strictly within $\mathcal{W}$ such that all other lattice points in the wedge lies right to the line $\ell$ supported by $(k,0)$ and $p_l$.
As $(k,0)$ is not in $\meddiamond'$, we know that $top'$ strictly lies to the left of  $\ell$. (See Figure~\ref{fig:wedge_right} a).
This point is $p_l(\frac{4l+4}{3},\frac{-2l-5}{3})$ (See Figure~\ref{fig:wedge_right} b).

We now study the intersection point of $\ell$ with the line $x = \frac{2l+5}{3}$. If this intersection point is below the $y$-coordinate $\frac{2l+5}{3}$ then we know that no lattice point lies strictly left to $\ell$, above or on $y = \frac{2l+5}{3}$, and strictly right to $y = x$.
$\ell$'s equation is $y = \frac{2l+5}{l+1}(l+1-x)$. At $x=\frac{2l+5}{3}$ that is $\frac{(2l+5)(l-2)}{3(l+1)}$. We now compare that to $\frac{2l+5}{3}$: \\
$\frac{(2l+5)(l-2)}{3(l+1)} - \frac{2l+5}{3}$ =
$\frac{(2l+5)(l-2)}{3(l+1)} - \frac{(2l+5)(l+1)}{3(l+1)} < 0$

Which proves that if $l \equiv 2 \pmod 3$ $top'$ cannot be located below the line $y = x$.

\end{subappendices}

\end{document}